\newenvironment{proof}{\par\noindent{\bf Proof.} }{\hfill$\square$\par}
\newtheorem{corollary}[theorem]{Corollary}
\newcommand{\cent}{\mbox{\textcent}}
\newcommand{\mymatrix}[2]{\left( \begin{array}{#1} #2\end{array} \right)}
\newcommand{\myvector}[1]{\mymatrix{c}{#1}}
\newcommand{\myrvector}[1]{\mymatrix{r}{#1}}
\begin{document}
\setlength{\textheight}{8.0truein}

\runninghead{Rational-Valued Affine Verifiers}{Z. Chen and J. Wu}

\normalsize\textlineskip
\thispagestyle{empty}
\setcounter{page}{1}

\copyrightheading{0}{0}{2026}{000--000}

\vspace*{0.88truein}

\alphfootnote

\fpage{1}

\centerline{\bf RATIONAL-VALUED AFFINE VERIFIERS IN ARTHUR--MERLIN PROOF SYSTEMS}
\vspace*{0.035truein}
\centerline{}
\vspace*{0.37truein}

\centerline{\footnotesize ZEYU CHEN\fnm{a}}
\vspace*{0.015truein}
\centerline{\footnotesize\it School of Mathematical Sciences, Zhejiang University}
\baselineskip=10pt
\centerline{\footnotesize\it Hangzhou 310058, People's Republic of China}
\vspace*{10pt}

\centerline{\footnotesize JUNDE WU\fnm{b}}
\vspace*{0.015truein}
\centerline{\footnotesize\it School of Mathematical Sciences, Zhejiang University}
\baselineskip=10pt
\centerline{\footnotesize\it Hangzhou 310058, People's Republic of China}
\fnt{a}{Corresponding author(s). E-mail(s): chenzeyu@zju.edu.cn.}
\fnt{b}{Contributing authors: wjd@zju.edu.cn.}
\vspace*{0.225truein}

\vspace*{0.21truein}

\abstracts{
Affine automata provide a finite-state computational model that preserves the linear-algebraic structure of quantum computation while operating entirely over the reals. Recent work has shown that affine automata can far surpass classical probabilistic finite-state verifiers. However, prior constructions relied on arbitrary real-valued transition matrices, leaving open whether the observed power stems from the affine mechanism itself or from computational resources implicitly encoded in irrational or infinite-precision parameters. This paper studies one-way and two-way automata with deterministic and affine states as verifiers in Arthur--Merlin proof systems under the restriction that every affine transition matrix has rational entries, and shows that the resulting rational model still supports the main verification advantages of affine finite-state verification. At the one-way level, we verify benchmark nonregular languages that are provably hard or impossible for classical two-way probabilistic verifiers. At the two-way level, we achieve weak verification of every Turing-recognizable language, strong bounded-error verification for every language in $\mathbf{ATIME}(2^{O(n)})$, and perfect-completeness strong verification for every language in $\mathbf{PSPACE}$. These results establish that the remarkable verification power of affine finite-state automata is structural.
}{}{}

\vspace*{10pt}

\keywords{Affine automata, Arthur--Merlin games, Turing-recognizable languages, Interactive proof systems, Bounded error}
\vspace*{6pt}

\section{Introduction}\label{sec:introduction}

Finite automata provide one of the most fundamental frameworks for understanding computation under constant internal memory. Deterministic and nondeterministic finite automata (DFAs and NFAs), introduced by Rabin and Scott \cite{rabin1959}, capture precisely the regular languages. Enriching the transition structure with linear algebra leads to qualitatively richer behavior: randomization yields probabilistic finite automata (PFAs) \cite{rabin1963}, while quantum interference gives rise to quantum finite automata (QFAs) \cite{kondacs1997,moore2000measure-once,ambainis1998,AmbainisW02,ambainis2006latvian}; see also the surveys \cite{SY14,AY21}. These extensions demonstrate that even constant-memory machines can exhibit surprisingly rich computational phenomena once linear-algebraic structure is introduced.

Affine automata, introduced by D\'{\i}az-Caro and Yakary{\i}lmaz \cite{diaz2016}, isolate a particularly clean form of this richer behavior. Like PFAs, they apply linear updates to a finite register, but they also permit negative coordinates, and acceptance is determined by taking absolute values and normalizing by the \(\ell_1\)-norm. This yields an interference-like effect without complex amplitudes or unitarity, making affine automata a natural bridge between probabilistic and quantum models. In automata theory, they offer a clean setting for studying the computational role of cancellation in signed linear systems; in quantum computing, they serve as a quantum-inspired benchmark that retains interference while dispensing with the Hilbert-space structure of genuine quantum evolutions. A growing body of work shows that affine automata can surpass PFAs and several QFA variants in recognition power, exact computation, succinctness, and verification tasks \cite{villagra2018,nakanishi2017,nakanishi2022exact,ibrahimov2018,ibrahimov2021error,khadieva2021,chen2025two}; see also \cite{hirvensalo2018} for a broader discussion of interference as a computational resource.

Interactive proof systems are a natural arena in which to measure this additional power. In an Arthur--Merlin (AM) proof system \cite{babai1985}, a resource-bounded verifier interacts with an all-powerful prover using public randomness; the space-bounded perspective was further developed in \cite{condon1989complexity,Con93}. For constant-memory classical verifiers, the verification power of two-way probabilistic machines is now well understood through the work of Dwork and Stockmeyer and related earlier results \cite{freivalds1981,DworkStockmeyer1990,dwork1992}. More recently, semi-quantum and affine verifiers have been shown to go significantly beyond this classical finite-state baseline \cite{say2017,ZhengQiuGruska2015,chen2025two}. In particular, two-way affine automata with deterministic and affine states (2ADfAs) can verify every language with bounded error \cite{chen2025two}.

This paper investigates the same question under the additional requirement that all affine transitions have rational entries. This restriction is significant for both technical and conceptual reasons. It keeps the verifier finitely describable, in the same explicit sense as standard rational PFAs \cite{rabin1963}, and it excludes hidden computational power that might arise from arbitrary real constants or infinite precision \cite{say2017}. Any remaining verification advantage must therefore originate from the affine mechanism itself---signed cancellation together with \(\ell_1\)-norm observation---rather than from the choice of transition parameters. Rational-valued affine automata thus provide a particularly rigorous model for isolating the structural power of interference-like computation.

The main contributions of this paper are as follows.
\begin{enumerate}
\item \textbf{One-way affine verifiers.} We construct perfect-completeness AM protocols for the fixed-center middle language and the fixed-center palindrome language, both operating in real time with a single affine register. These two nonregular languages are provably hard or impossible for classical two-way probabilistic verifiers, so these protocols demonstrate a concrete and immediate verification advantage for affine machines.
\item \textbf{Two-way affine verifiers: history-streaming route.} We first establish a weak protocol that verifies every Turing-recognizable language by streaming and checking a computation history. We then introduce a probabilistic continuation check based on \emph{sweep-based metronome clocking} of the input head. This upgrade yields strong verification with bounded error for deterministic computations whose verified histories have polynomial or exponential total length.
\item \textbf{Two-way affine verifiers: game-reduction route.} We develop an independent approach via an alternating subset-sum game language. By streaming a linear-space reduction to this PSPACE-complete problem and verifying the game interactively, we obtain perfect-completeness strong verification for every language in \(\mathbf{PSPACE}\).
\end{enumerate}

Taken together, these results establish that the remarkable verification power of affine finite-state automata is entirely preserved when the transition matrices are restricted to rational entries. The advantage is therefore structural, rooted in the algebraic mechanism of signed cancellation and \(\ell_1\)-norm observation, rather than an artifact of irrational parameters or infinite precision. More broadly, the paper reinforces the role of affine automata as a compelling bridge between automata theory and quantum computing: the model is simple enough to isolate interference-like cancellation as a computational resource, yet expressive enough to support verification phenomena that far exceed the classical finite-state setting.

The remainder of the paper is organized as follows. Section~\ref{sec:affine-computation} reviews affine computation. Section~\ref{sec:automata-and-AM} introduces ADfAs and the AM framework. Section~\ref{sec:1ADfA} presents the one-way protocols. Section~\ref{sec:history-streaming} develops the history-streaming route, beginning with weak verification for Turing-recognizable languages and then strengthening it via continuation checks. Section~\ref{sec:game-reduction} develops the game-reduction route and derives the PSPACE consequence. Section~\ref{sec:conclusion} concludes.

\section{Affine computation}\label{sec:affine-computation}
Inspired by quantum systems, affine systems generalize probabilistic systems by allowing states to take negative values, evolve via linear transformations, and extract information through operations analogous to quantum measurements. In this section, we introduce the fundamental notions of affine systems and discuss specific affine operators. We refer readers to \cite{diaz2016,hirvensalo2018} for further background on affine computation and interference-based viewpoints.

\subsection{Basics of affine systems}
An $m$-state affine register with basis $\{e_{1},\dots,e_{m}\}$ lives in $\mathbb{R}^{m}$. An \emph{affine state} is a vector
\[
v=\begin{pmatrix}v_{1}\\\vdots\\v_{m}\end{pmatrix}\in\mathbb{R}^{m},
\qquad
\sum_{j=1}^{m}v_{j}=1.
\]
We denote
\[
e_j \;=\; \myvector{0 \\ \vdots \\ 0 \\ 1 \\ 0 \\ \vdots \\ 0} 
\quad\leftarrow\ \text{$j$th entry}
\]
when the register is definitely in basis state $e_{j}$. For any affine state $v$, the $i$th entry is denoted $v_i$.

An \emph{affine operator} is an $m\times m$ matrix
\[
A=(a_{ij})\in\mathbb{R}^{m\times m},
\qquad
\sum_{i=1}^{m}a_{ij}=1\ \text{for each column }j,
\]
which maps affine states to affine states via $v' = Av$.

To retrieve information from an affine register, we apply a \emph{weighting operator}. For $v$ as above, after weighting, the probability of observing $e_{j}$ is
\[
P(j)=\frac{|v_{j}|}{\|v\|_{1}} \in[0,1],
\]
where $\|v\|_{1}=\sum_i |v_i|$ is the $\ell_1$-norm. Upon observation $j$, the register collapses to $e_{j}$. Weighting is analogous to projective measurement in quantum systems. However, affine states obey the strict normalization rule that entries sum to $1$, which prevents a register from remaining in ``superposition'' after a partial observation. For example, consider the affine vector
\[
\myrvector{1 \\ -1 \\ 1}.
\]
If we attempt a \emph{partial weighting} that separates the basis into
\(\{e_1,e_2\}\) and \(\{e_3\}\), the unnormalized outcomes are
\[
\myrvector{1 \\ -1 \\ 0 } \quad\text{and}\quad \myvector{0 \\ 0 \\ 1}.
\]
The first vector cannot be a legal affine state, since no rescaling makes its entry-sum equal to~$1$.
Therefore, computations that require partial observation employ multiple affine registers: some registers are weighted while others preserve the ``superposition.''

\subsection{Elementary affine operators and string encoding}\label{sec:elementary-affine-operators}
It follows from the definition that the composition of two affine operators is affine, and the inverse of an invertible affine operator is also affine. 

Suppose we have an affine state 
\[
v=\begin{pmatrix}
x_1 ,\, x_2 ,\,\ldots,\,x_n,\,y,\,\bar{1}
\end{pmatrix}^{T},
\]
where $\bar{1}$ is a balancing entry that preserves the sum-to-$1$ condition throughout this paper. We can compute the linear combination $s=c_1x_1+c_2x_2+\cdots+c_nx_n$ and overwrite $y$ with $s$ by applying the affine operator
\[
\begin{pmatrix}
    x_1\\x_2\\ \vdots \\x_n\\s\\ \bar{1}
\end{pmatrix}
=\begin{pmatrix}
1 &0 &\cdots &0 & 0& 0\\
0 &1 &\cdots &0 & 0& 0\\
\vdots &\vdots &\ddots&\vdots &\vdots& \vdots\\
0 &0 & \cdots&1 & 0& 0\\
c_1 & c_2& \cdots &c_n &0 &0\\
-\,c_1 & -\,c_2 &\cdots &-\,c_n & 1 &1\\
\end{pmatrix}
\begin{pmatrix}
    x_1 \\ x_2 \\\vdots\\x_n\\y\\ \bar{1}
\end{pmatrix}.
\]

Given an ordered alphabet $\mathcal{A}=\{\sigma_0,\sigma_1,\ldots,\sigma_{n-1}\}$, let $\mathcal{A}^\ast$ denote all strings over $\mathcal{A}$, including the empty string $\varepsilon$. Define $\mathrm{val}:\mathcal{A}^\ast \rightarrow \mathbb{N}$ by
\[
\mathrm{val}(w)=\sum_{k=1}^{l} i_k\,n^{\,l-k}
\]
for $w=\sigma_{i_1}\sigma_{i_2}\ldots \sigma_{i_l}$. Thus the current symbol is appended as the least significant new base-$n$ digit when the string is scanned from left to right.
We can encode $val(w)$ using a three-state affine register. Start in the state $\begin{pmatrix} 1,0,0 \end{pmatrix}^T$ and, when the current symbol is $\sigma_{k}$, apply
\begin{equation}
    A_k=\mymatrix{ccc}{
    1 & 0 & 0 \\
    k & n & 0 \\
    -k & 1-n & 1 \\
    }.
\end{equation}
After the whole string is read, the state becomes $\begin{pmatrix} 1,\,\mathrm{val}(w),\,-\mathrm{val}(w) \end{pmatrix}^{T}$, and the value of the string is stored in the second entry.


\subsection{Calculating polynomials and exponents} \label{sec:polynomial-exponent}
Let $p(x)$ be a degree-$d$ polynomial. We read the string $0^l$ and encode $p(l)$ in a designated entry of an affine state.

First, encode $l$ using a three-state register. Start in $\begin{pmatrix}1,0,0\end{pmatrix}^{T}$ and, for each symbol $0$, apply
\begin{equation*}
    \mymatrix{rrr}{
        1 & 0 & 0\\
        1 & 1 & 0\\
        -1 & 0 & 1\\}.
\end{equation*}
If after $i$ symbols the state is
\[
v_i=\myrvector{1\\i\\-i},
\]
then after reading the $(i+1)$-th symbol we have
\[
v_{i+1}=\mymatrix{rrr}{
        1 & 0 & 0\\
        1 & 1 & 0\\
        -1 & 0 & 1\\
    } 
    \myrvector{1\\i\\-i}
=\begin{pmatrix}
    1\\i+1\\-(i+1)
\end{pmatrix}.
\]
By induction, the final state is
\[
v_f=\myrvector{1\\l\\-l},
\]
so $l$ is stored in entry $2$.

Since $(i+1)^2=i^2+2i+1$, we can compute $(i+1)^2$ as a linear combination of $i^2,i,1$ using the method from Section~\ref{sec:elementary-affine-operators}. To encode $l^2$, apply, for each symbol,
\begin{equation*}
    \mymatrix{rrrr}{
        1 & 0 & 0& 0 \\
        1 & 1 & 0& 0 \\
        1 & 2 & 1& 0 \\
        -2 & -2 & 0& 1 \\
    }
\end{equation*}
with initial state $v_0=\begin{pmatrix} 1, 0, 0, 0 \end{pmatrix}^{T}$. The induction step is
\begin{equation*}
    v_{i+1}=\mymatrix{rrrr}{
        1 & 0 & 0& 0 \\
        1 & 1 & 0& 0 \\
        1 & 2 & 1& 0 \\
        -2 & -2 & 0& 1 \\
    }\begin{pmatrix}
          1\\i\\i^2\\-i-i^2
    \end{pmatrix}
=\begin{pmatrix}
        1\\i+1\\1+2i+i^2\\
        -(i+1)-(i+1)^2
    \end{pmatrix}.
\end{equation*}

We generalize to $l^d$ via the binomial expansions
\begin{equation}\label{eq:binomial-expansions}
(i+1)^k=\sum_{j=0}^{k}\begin{pmatrix}
     k\\
     j 
\end{pmatrix} i^j.
\end{equation}
Thus a degree-$d$ polynomial $p(l)$ is a linear combination of $1,l,\ldots,l^d$. Define the affine update for symbol $0$ as the composition of two operators: the first updates the first $(d+1)$ entries using \eqref{eq:binomial-expansions}, and the second computes $p(i+1)$ as the corresponding linear combination:
\[
\begin{pmatrix}
    1\\i\\ \vdots\\i^d\\p(i)\\ \bar{1}
\end{pmatrix}\longrightarrow\begin{pmatrix}
    1\\i+1\\ \vdots\\(i+1)^d\\p(i)\\ \bar{1}
\end{pmatrix}\longrightarrow\begin{pmatrix}
    1\\i+1\\ \vdots\\(i+1)^d\\p(i+1)\\ \bar{1}
\end{pmatrix}.
\]

We can also encode $a^l$ when reading a string of length $l$ for a real number $a$. We use a two-state register, start in $v_0=\begin{pmatrix}1,0\end{pmatrix}$, and for each symbol $0$ apply
\begin{equation*}
    M_a=\mymatrix{cc}{
    a & 0\\
    1-a & 1
    }.
\end{equation*}
The final state is
\[
v_f=\myvector{a^l\\1-a^l},
\]
so $a^l$ is stored in the first entry.

\section{Affine finite automata as verifiers}\label{sec:automata-and-AM}
A model of computation specifies how control, memory, and communication are organized. In this section, we use affine systems to define automata with deterministic and affine states and describe how they act as verifiers in interactive proof systems. Our verifier model builds on the affine-automata and affine-verifier frameworks developed in \cite{diaz2016,khadieva2021,chen2025two}. We assume that readers are familiar with automata theory, especially deterministic, nondeterministic, and alternating Turing machines (abbreviated as DTM, NTM, and ATM, respectively), and their time- and space-bounded complexity classes \(\mathcal{X}\mathrm{TIME}\) and \(\mathcal{X}\mathrm{SPACE}\), where \(\mathcal{X}\) is \(\mathbf{D}\) for deterministic, \(\mathbf{N}\) for nondeterministic, and \(\mathbf{A}\) for alternating computations. We refer the reader to \cite{Sip13} for the basics of Turing machines and automata theory, to \cite{chandra1981alternation,hartmanis1965computational} for the complexity-theoretic resource framework used later, and to \cite{Con93} for an excellent review of space-bounded interactive proof systems.  

\subsection{Automata with deterministic and affine states}\label{sec:automata}
We begin with the (one-way) deterministic finite automaton (DFA), which reads the input from left to right. Formally, a DFA is a $5$-tuple
\[
M=(S,\Sigma,\delta,s_I,S_a),
\]
where
\begin{enumerate}
  \item $S=\{s_1,s_2,\ldots,s_m\}$ is a finite set of states;
  \item $\Sigma$ is the input alphabet and $\tilde{\Sigma}=\Sigma\cup\{\cent,\$\}$ augments it with left and right end-markers;
  \item $\delta:S\times \tilde{\Sigma}\to S$ is the transition function.
  \item $s_I\in S$ is the initial state.
  \item $S_a\subseteq S$ is the set of accepting states.
\end{enumerate}

The automaton works on a semi-infinite tape whose squares are numbered $0,1,2,\ldots$. The input $w\in\Sigma^*$ is padded as $\tilde{w}=\cent w\$$ on a read-only tape. The machine starts in $s_I$ on $\cent$. If $\delta(s,\sigma)=s'$, the automaton enters $s'$ and advances the head one cell to the right. It halts after reading the end-marker $\$$. If the current state is an accepting state, the machine accepts the input. Otherwise, the machine rejects the input.

We extend DFAs by equipping the machine with $k>0$ affine registers (defined in Section~\ref{sec:affine-computation}) that can be updated. A (one-way) automaton with deterministic and affine states (ADfA) is a $6$-tuple
\[
M=(S,\Sigma,\delta,s_I,S_a,\{R_1,\ldots,R_k\}),
\]
where the deterministic part $(S,\Sigma,s_I,S_a)$ is as in a DFA. Each register
\[
R_i=(E_i,\mathcal{A}_i,F_i),\ \ E_i=\{e_{i,1},\ldots,e_{i,m_i}\},\ \ \mathcal{A}_i=\{A_{i,1},\ldots,A_{i,\ell_i}\},\ F_i\subseteq E_i
\]
has basis states $E_i$ (with initial basis element fixed as $e_{i,1}$), a finite set of affine operators $\mathcal{A}_i$, and an accepting set $F_i$. Unless stated otherwise, all affine operators have rational entries. 

The transition function of ADfA is
\[
\delta: S\times\tilde{\Sigma}\to S\times\big(\mathcal{A}_1\times\cdots\times\mathcal{A}_k\big),\qquad
\delta(s,\sigma)=(s',\,O_1,\ldots,O_k),
\]
meaning that, on $(s,\sigma)$, the deterministic state updates to $s'$ and each register $R_i$ is updated by $O_i\in\mathcal{A}_i$.

The ADfA performs exactly one weighting step, after reading the right end-marker $\$$. If the current deterministic state belongs to $S_a$, the verifier weights each affine register once. The input is accepted if and only if all observed outcomes $\tau_i$ lie in their accepting sets, i.e., $e_{i,\tau_i}\in F_i$ for every $i$; otherwise it is rejected. If the current deterministic state is not in $S_a$ when $\$$ is read, the input is rejected.

Permitting the head to stay put or move left yields two-way models. A two-way deterministic finite automaton (2DFA) is a $6$-tuple
\[
M=(S,\Sigma,\delta,s_I,S_a,S_r),
\]
with components as before except that $\delta:S\times \tilde{\Sigma}\to S\times\{-1,0,+1\}$ and $S_r\subseteq S$ is the set of rejecting states. The machine starts in $s_I$ scanning $\cent$, and a move $(s',d)$ updates the state to $s'$ and the head by $d\in\{-1,0,+1\}$. The machine moves the head left for $d=-1$, right for $d=+1$, or keeps it stationary when $d=0$. The tape head is not allowed to leave the string $\tilde{w}=\cent w\$$. Unlike a DFA, halting of 2DFA is controlled by the deterministic states: the computation halts immediately upon entering a state in $S_a$ (accept) or in $S_r$ (reject).

A two-way ADfA (2ADfA) extends a 2DFA with affine registers as in the definition of ADfA. Formally, a 2ADfA is a $7$-tuple
\[
M=(S,\Sigma,\delta,s_I,S_a,S_r,\{R_1,\ldots,R_k\}),
\]
where registers \[
R_i=(E_i,\mathcal{A}_i),\ \ E_i=\{e_{i,1},\ldots,e_{i,m_i}\},\ \ \mathcal{A}_i=\{A_{i,1},\ldots,A_{i,\ell_i}\},
\] and $\delta=(\delta_a,\delta_c)$ with
\begin{align}
\delta_a(s,\sigma) = (O_1,\ldots,O_k),\qquad O_i\in \mathcal{A}_i\cup\{W_i\}, \label{eq:delta-a-2way}\\
\delta_c(s,\sigma,\tau_1,\ldots,\tau_k) = (s',d),\qquad d\in\{-1,0,+1\}. \label{eq:delta-c-2way}
\end{align}
Here $W_i$ denotes the unique weighting operation on register $R_i$. We write $\tau_i=0$ if $O_i\in\mathcal{A}_i$ and $\tau_i\in\{1,\ldots,m_i\}$ if $O_i=W_i$, in which case $\tau_i$ is the observed basis index. Unlike ADfA, 2ADfA may weight many times during the computation. Explicitly, each computation step executes first the affine part and then the deterministic part: given the current deterministic state $s$ and scanned symbol $\sigma\in\tilde{\Sigma}$, the affine phase applies $O_i$ to each register $R_i$, producing outcomes $\tau_i$ only when $W_i$ is used; afterwards, the deterministic phase updates the state and head by $(s',d)=\delta_c(s,\sigma,\tau_1,\ldots,\tau_k)$. The halting condition differs from the one-way case: a 2ADfA halts immediately upon entering a state in $S_a$ (accept) or in $S_r$ (reject), and no final weighting is performed.

Because each weighting operation $W_i$ probabilistically collapses $R_i$ to one of several basis states, the computation of a 2ADfA on input $w$ unfolds as a branching tree. Each node represents a configuration
\[
(s,j,v_1,\ldots,v_k),
\]
where $s\in S$ is the deterministic state, $j$ is the head position, and $v_i$ is the current affine state of register $R_i$. A step in which no register is weighted produces a single child; a step that weights one or more registers branches into several children, one for each combination of outcomes, with the corresponding edge labeled by its occurrence probability. The root configuration is $(s_I,0,e_{1,1},\ldots,e_{k,1})$, and the tree may be infinite. Every leaf is a halting configuration that is either accepting or rejecting. We write $Acc_M(w)$ and $Rej_M(w)$ for the total acceptance and rejection probabilities, respectively. The inequality $0\le Acc_M(w)+Rej_M(w)\le 1$ always holds; any deficit equals the probability of non-halting.

\subsection{Interactive proof systems}\label{sec:AM}
An interactive proof system (IPS) consists of a prover (P) and a verifier (V). The prover is computationally unbounded and untrusted; the verifier is resource-bounded and honest. In this work, the verifier \(V\) is an automaton with deterministic and affine states: either an ADfA or a 2ADfA (Section~\ref{sec:automata}).

The verifier has an extra set of communication states \(S_{\mathrm{com}}\subseteq S\) and a fixed communication alphabet \(\Gamma\). There is a write map
\[
\chi:S_{\mathrm{com}}\to \Gamma,
\]
and a single shared communication cell that is initially blank. Whenever the verifier enters a state \(s\in S_{\mathrm{com}}\), it writes \(\chi(s)\) to the cell; the prover immediately overwrites the cell with a reply symbol \(\rho\in\Gamma\). Then, the verifier changes its deterministic state to $s'$ according to the current deterministic state $s$ and the reply symbol $\rho$. To model streaming messages, we may reserve two symbols \(1\) (the request symbol) and \(\#\) (the end-marker) that do not occur in the input \(w\) and let the prover respond to successive requests \(1\) with symbols of a target string \(x\#\) in order.

Let \(Acc_V(w)\) and \(Rej_V(w)\) be the total probabilities of acceptance and rejection, so \(0\le Acc_V(w)+Rej_V(w)\le 1\). We say \(V\) verifies a language \(L\subseteq\Sigma^*\) with error \(\epsilon<1/2\) if there exists a prover \(P\) such that:
\begin{enumerate}
    \item (Completeness) For all \(w\in L\), the pair \((P,V)\) accepts \(w\) with probability at least \(1-\epsilon\).
    \item (Soundness) For all \(w\notin L\) and all provers \(P^*\), the pair \((P^*,V)\) rejects \(w\) with probability at least \(1-\epsilon\).
\end{enumerate}
Equivalently, soundness may be stated as \(\Pr[(P^*,V)\text{ accepts }w]\le \epsilon\) if non-halting is counted as rejection. We can relax the soundness condition as follows:
\begin{enumerate}
    \item[2'] For all \(w\notin L\) and all \(P^*\), the pair \((P^*,V)\) accepts \(w\) with probability at most \(\epsilon\).
\end{enumerate}
Protocols satisfying conditions (1) and (2) are called strong, while those satisfying (1) and (2') are called weak. For one-way automata, this distinction is moot; since the verifier always halts in real time, the probability of non-halting is zero, making the strong and weak soundness conditions equivalent.

An IPS is Arthur--Merlin (AM) if, at every step, the verifier reveals to the prover its new deterministic state and head move, and any weighting outcomes \(\tau_i\) that occurred. Thus the prover has complete information about \(V\)'s random choices and classical trajectory (public coins). The affine registers themselves remain internal except for disclosed outcomes.

We will use $\mathbf{IP}(\cdot)$ and $\mathbf{AM}(\cdot)$ to represent the complexity classes verifiable by IP and AM systems, respectively. When all transitions are restricted to rational entries, we add a subscript \(q\): \(\mathbf{IP}_q(\cdot)\), \(\mathbf{AM}_{\textnormal{q}}(\cdot)\). A protocol has perfect completeness if condition (1) holds with $\epsilon=0$, i.e., every $w\in L$ is accepted with probability $1$. We write $\mathbf{AM}_{\textnormal{q}}^{=1}(\cdot)$ for the subclass of AM protocols with rational transitions and perfect completeness. In the following, we always assume $\epsilon$ to be rational.

To situate our verifier simulations, we recall the classical relationships between alternating and deterministic complexity, established in the seminal paper of Chandra, Kozen, and Stockmeyer \cite{chandra1981alternation}. Alternation trades off with resources as follows: alternating time is captured by deterministic space with only quadratic overhead in the converse direction, and alternating space collapses to deterministic time with exponential blowup in the space bound.

\begin{theorem}[Chandra--Kozen--Stockmeyer]\label{thm:facts}
Let $t(n)\ge \log n$ be time-constructible and $s(n)\ge \log n$ be space-constructible. Then:
\begin{enumerate}
    \item $\mathbf{ATIME}\!\big(t(n)\big)\ \subseteq\ \mathbf{DSPACE}\!\big(t(n)\big)$.
    \item $\mathbf{DSPACE}\!\big(s(n)\big)\ \subseteq\ \mathbf{ATIME}\!\big(s(n)^2\big)$.
\end{enumerate}
\end{theorem}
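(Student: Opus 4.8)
The plan is to prove the two inclusions by two different classical simulations, handling them in turn. For the first inclusion, I would simulate an alternating machine $M$ running in time $t(n)$ by a deterministic machine $D$ that evaluates the Boolean value of $M$'s computation tree by depth-first search. Since $M$ halts within $t(n)$ steps on every branch, that tree has depth at most $t(n)$ and constant branching, and each reachable configuration occupies $O(t(n))$ cells. A naive DFS that stored a whole configuration at each of the $t(n)$ levels would cost $\Theta(t(n)^2)$ space, so the whole point is to avoid materializing the path. The key idea is to represent the current root-to-node path not by the configurations themselves but by the sequence of nondeterministic choices that produced it, a string of length at most $t(n)$ over a constant alphabet, hence $O(t(n))$ cells. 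Whenever $D$ needs the configuration at the current node, to read its state, decide whether it is existential or universal, and enumerate its successors, it replays $M$ from the initial configuration along the stored choice sequence; one configuration fits in $O(t(n))$ space, and the (possibly large) replay time is free in the space measure. Backtracking then combines leaf values with OR at existential nodes and AND at universal nodes, keeping one accumulator bit per level, so $D$ runs in $O(t(n))$ space overall, with time-constructibility of $t$ used to lay off the depth bound.

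For the second inclusion I would use the divide-and-conquer reachability argument in the style of Savitch, but charged to alternating time. A deterministic machine using space $s(n)\ge\log n$ has at most $2^{c\,s(n)}$ distinct configurations (the $\log n$ hypothesis ensures the input-head position fits in $O(s(n))$ bits), and it accepts iff its accepting configuration is reachable from the start in at most $2^{c\,s(n)}$ steps. I would build an alternating machine verifying the predicate $\mathrm{REACH}(c_1,c_2,k)$, meaning ``$c_2$ is reachable from $c_1$ in at most $2^{k}$ steps,'' with the recursion: for $k=0$, deterministically check the one-step relation $c_1\vdash c_2$ (or $c_1=c_2$) in $O(s(n))$ time; for $k>0$, \emph{existentially} guess a midpoint configuration $m$ and then \emph{universally} branch to verify both $\mathrm{REACH}(c_1,m,k-1)$ and $\mathrm{REACH}(m,c_2,k-1)$. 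Along any branch of the alternation tree the depth is $k=O(s(n))$, and each level spends $O(s(n))$ time guessing a configuration and repositioning the relevant pair, so the running time is $O(s(n)^2)$, with space-constructibility of $s$ supplying the initial value of $k$.

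The hard part in the first inclusion is exactly the space bound: descending from the obvious $t(n)^2$ to $t(n)$ rests entirely on never holding more than one configuration at a time and recomputing configurations on demand from the compact choice-sequence encoding, trading unbounded time for space. The hard part in the second inclusion is the bookkeeping of the alternating recursion, namely checking that a single existential guess followed by a single universal split correctly implements $\mathrm{REACH}$, that the tape contents inherited by each universal branch let it isolate its own pair $(c_1,m)$ or $(m,c_2)$, and that the per-level cost stays $O(s(n))$ so that the depth-$O(s(n))$ recursion totals $O(s(n)^2)$ rather than more. Both arguments also quietly lean on the $\log n$ hypotheses to keep configuration encodings linear in the resource bound, and on constructibility to make the relevant bounds available to the simulators.
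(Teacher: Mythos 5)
Your proposal is correct, and it reproduces the standard arguments from Chandra--Kozen--Stockmeyer, which the paper itself invokes by citation without giving a proof: depth-first evaluation of the computation tree with the path stored as a choice sequence and configurations recomputed on demand for $\mathbf{ATIME}(t)\subseteq\mathbf{DSPACE}(t)$, and the Savitch-style $\mathrm{REACH}$ recursion with an existential midpoint guess followed by a universal split for $\mathbf{DSPACE}(s)\subseteq\mathbf{ATIME}(s^2)$. Both simulations are sound, and you correctly identify the two places where care is needed (avoiding the $t(n)^2$ space blow-up, and keeping the per-level cost of the alternating recursion at $O(s(n))$).
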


\section{Examples of languages verified by ADfAs}\label{sec:1ADfA}
In this section, we demonstrate the verification power of one-way ADfAs in an AM setting through two concrete languages. Fix an alphabet \(\Sigma=\{\sigma_0,\sigma_1,\ldots,\sigma_{n-1}\}\) and a distinguished center symbol \(c\notin\Sigma\). We consider
\[
L_{\mathrm{middle}}=\{xcy\mid x,y\in\Sigma^*,\ |x|=|y|\}
\]
and
\[
L_{\mathrm{mpal}}=\{xcx^R\mid x\in\Sigma^*\}.
\]
The language \(L_{\mathrm{middle}}\) is known to require exponential expected time for AM(2PFA) verifiers \cite{dwork1992,ZhengQiuGruska2015}, and \(L_{\mathrm{mpal}}\) cannot be verified by any AM(2PFA) protocol \cite{ZhengQiuGruska2015}.

\begin{theorem}\label{thm:Lmiddle}
    $L_{\mathrm{middle}}$ can be verified with bounded error $\epsilon < 1/2$ by an ADfA with a single $3$-state affine register. The protocol achieves perfect completeness.
\end{theorem}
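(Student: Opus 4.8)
The plan is to reduce the center condition $|x|=|y|$ to a single integer comparison that a $3$-state affine register computes in one left-to-right pass, and then to read off the comparison with the one weighting that a one-way ADfA performs at the right end-marker. Concretely, the verifier maintains the register in the encoding $(1,\,c,\,-c)^{T}$ from Section~\ref{sec: elementary affine operators}, where $c$ is a running signed count: while scanning the candidate left block $x$ it adds $+1$ to $c$ at each symbol, and while scanning the right block $y$ it adds $-1$, so that after the input is consumed the middle entry holds $d=|x|-|y|\in\mathbb{Z}$. The sign flip is triggered by the prover: in the AM framework of Section~\ref{sec:AM}, the verifier issues a request at each position and the prover answers whether the current cell is the claimed center; the honest prover marks the true middle symbol $\sigma$, while the finite control rejects outright any malformed input (no center is ever announced, or a second center is announced).

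For the weighting step, set the accepting set to $F=\{e_{1}\}$. On the final state $(1,\,d,\,-d)^{T}$ the $\ell_{1}$ norm is $1+2|d|$, so the single weighting accepts with probability $1/(1+2|d|)$. When $d=0$ the state is $(1,0,0)^{T}$ and acceptance is certain, which gives perfect completeness: for $w\in L_{\text{middle}}$ the honest prover names the true center, forcing $|x|=|y|$ and hence $d=0$. For soundness, the key observation is integrality: whenever $d\neq 0$ we have $|d|\ge 1$, hence acceptance probability at most $1/3<1/2$, and this bound covers every cheating strategy at once. If $w\notin L_{\text{middle}}$ has even length, any announced split satisfies $|x|+|y|=|w|-1$ odd, so $|x|\neq|y|$ and $d$ is a nonzero (odd) integer; if a dishonest prover mispoints the center on an odd input, the count is displaced by a nonzero integer as well. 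In all cases the verifier accepts with probability at most $1/3$, so the protocol is strong (here weak and strong coincide, since a one-way verifier always halts).

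The main obstacle I expect is the affine bookkeeping rather than the probability estimate: I must keep the register a legal affine state (entries summing to $1$) throughout, even as the increment changes sign at the center, which is why the balancing entry $-c$ is carried alongside $c$ and updated by the linear-combination operator of Section~\ref{sec: elementary affine operators}. A related point is to verify that deferring the entire comparison to the single permitted weighting is sound---that no partial observation is needed midway---so that the one-way weighting discipline of an ADfA is respected. Once these are in place, the separation from polynomial-time $2$PFA verifiers of Dwork and Stockmeyer follows immediately, since the ADfA attains perfect completeness and bounded error using only a constant-size affine register.
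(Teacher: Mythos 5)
Your core mechanism is the same as the paper's: a signed unit counter kept in the affine encoding $(1,c,-c)^{T}$, incremented before the announced center and decremented after it, with the single end-of-input weighting turning the integrality of $d=|x|-|y|$ into an acceptance probability of $1/(1+2|d|)$. Two points need repair, however. First, your soundness case analysis has a hole: you reject only when no center or more than one center is announced, but you never have the verifier check that the announced position actually carries the designated middle symbol $\sigma$. On an odd-length input whose true middle symbol is not $\sigma$ (so $w\notin L_{\text{middle}}$), a cheating prover simply points at the true middle; then $d=0$, the final state is $(1,0,0)^{T}$, and your verifier accepts with probability $1$. The paper's protocol closes this by rejecting deterministically whenever the claimed middle cell does not contain $\sigma$; you need the same check, and your case analysis (``even length'' versus ``mispointed center on odd input'') must be extended to cover the truthfully-pointed-but-wrong-symbol case.

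Second, the theorem is stated for an arbitrary error bound $\epsilon<1/2$, and the paper achieves this by applying, at the end-marker, a final affine map $M_F$ that rescales the two counter coordinates by $\delta=(1-\epsilon)/(2\epsilon)$, driving the acceptance probability on non-members down to $1/(1+2|m|\delta)\le\epsilon$. Your protocol stops at the raw bound $1/(1+2|d|)\le 1/3$. That is still bounded error below $1/2$, but it does not deliver the tunable soundness that the theorem (and the paper's later discussion of ``tunable soundness'') claims; adding this single amplification matrix before the weighting costs nothing in register size and recovers the full statement.
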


\begin{proof}
The verifier uses one $3$-state affine register, initialized to \(v_0=(1,0,0)^T\). While scanning the input from left to right, it asks the prover whether the current symbol is the unique center symbol.
\begin{itemize}
    \item Before the claimed center, the verifier applies
    \[
    A=\begin{pmatrix}
        1&0&0\\
        1&1&0\\
        -1&0&1
    \end{pmatrix}.
    \]
    \item At the claimed center, the verifier checks that the scanned symbol is \(c\). If not, it rejects immediately. If the check succeeds, it switches to the second phase without changing the register.
    \item After the claimed center, the verifier applies
    \[
    A^{-1}=\begin{pmatrix}
        1&0&0\\
        -1&1&0\\
        1&0&1
    \end{pmatrix}
    \]
    to every remaining symbol.
\end{itemize}
If the scanned symbol is \(c\) outside the claimed center, or if the prover marks no position or more than one position as the center, the verifier rejects deterministically. After reading the right end-marker, it applies
\begin{equation}
  M_F= \begin{pmatrix}
    1&1-\delta&1-\delta\\
    0&\delta&0\\
    0&0&\delta
\end{pmatrix},
\end{equation}
where \(\delta=\frac{1-\epsilon}{2\epsilon}\), and then weights the register. The input is accepted exactly when outcome~1 is observed.

\textbf{Completeness.}
If \(w=xcy\in L_{\mathrm{middle}}\) with \(|x|=|y|=k\), the honest prover marks the unique occurrence of \(c\). The operator sequence is
\[
(A^{-1})^k I A^k,
\]
so the final state before \(M_F\) is \(v_f=v_0=(1,0,0)^T\). Hence the weighted outcome is 1 with probability 1.

\textbf{Soundness.}
Let \(w\notin L_{\mathrm{middle}}\), let \(\ell:=|w|\), and let the prover mark position \(j\). If the symbol at position \(j\) is not \(c\), the verifier rejects immediately. Otherwise the final state before \(M_F\) is
\[
v_f=(A^{-1})^{\ell-j}A^{j-1}v_0=A^m v_0,
\qquad m:=2j-\ell-1.
\]
Since \(w\notin L_{\mathrm{middle}}\), the marked position cannot split the input into two parts of equal length, so \(m\neq 0\). A routine induction gives \(A^m v_0=(1,m,-m)^T\) for every integer \(m\). Therefore
\[
M_Fv_f=
\begin{pmatrix}
1\\ m\delta\\ -m\delta
\end{pmatrix},
\]
and the acceptance probability is
\[
\Pr[\mathrm{acc}]=\frac{1}{1+2|m|\delta}\leq \frac{1}{1+2\delta}=\epsilon.
\]
Thus the soundness error is at most \(\epsilon\).
\end{proof}

\begin{theorem}\label{thm:lmpal}
    $L_{\mathrm{mpal}}$ can be verified with bounded error $\epsilon$ by an ADfA with $4$ affine states. Moreover, the protocol achieves perfect completeness.
\end{theorem}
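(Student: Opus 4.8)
The plan is to mirror the structure of the proof of Theorem~\ref{thm:Lmiddle}, since $L_{\text{mpal}}=\{x\sigma x^R\mid x\in\Sigma^*,\sigma\in\Sigma\}$ has the same ``guess the midpoint, then check a palindrome constraint'' flavor. The verifier reads $w$ left to right, and at each position asks the prover whether the current symbol is the central symbol $\sigma$. Before the claimed center, the verifier \emph{accumulates} an encoding of the prefix $x$ into an affine register; after the claimed center, for each subsequent symbol it must \emph{verify} that the reversed suffix matches the accumulated prefix. The key idea I would exploit is the string-encoding construction from Section~\ref{sec: elementary affine operators}: using the operators $A_k$ on an $(|\Sigma|+2)$-state register (the ``$+2$'' giving room for the running $val$ entry and the balancing $\bar 1$ entry), the prefix $x=\sigma_{i_1}\cdots\sigma_{i_l}$ is encoded as $val(x)=\sum_k (i_k-1)n^{k-1}$ in one register entry. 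The crucial arithmetic fact is that the base-$n$ value can be \emph{undone} symbol by symbol: since each $A_k$ is invertible (it is lower-triangular with unit determinant up to the $n$ factor, and the paper already notes inverses of affine operators are affine), the verifier can in the second phase apply $A_k^{-1}$ as it reads each symbol of the suffix. If the suffix read in order equals $x^R$, then the sequence of inverse operators exactly cancels the accumulated value and the register returns to its initial state $(1,0,\dots,0)^T$.

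The steps, in order, would be as follows. First, set up the register and define the phase-one operators $A_{k}$ (indexed by the scanned symbol $\sigma_k$) that build $val(x)$, and the phase-two inverse operators that strip it off; the midpoint and the central symbol check $w_j=\sigma$ are handled deterministically exactly as in Theorem~\ref{thm:Lmiddle}, with deterministic rejection on a malformed prover response. Second, for \textbf{completeness}: if $w=x\sigma x^R\in L_{\text{mpal}}$, the honest prover marks the true center, phase one encodes $val(x)$, and phase two reads $x^R$; because reading $x^R$ in the forward direction feeds the symbols of $x$ in \emph{reverse} order, I must choose the encoding/decoding so that the cancellation is order-correct. Then the final register state is exactly $(1,0,\dots,0)^T$, and a final transformation $M_F$ (analogous to the one in Theorem~\ref{thm:Lmiddle}, scaling all but the first coordinate by $\delta=\tfrac{1-\epsilon}{2\epsilon}$) leaves it at $(1,0,\dots,0)^T$, giving acceptance probability $1$, hence perfect completeness. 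Third, for \textbf{soundness}: if $w\notin L_{\text{mpal}}$, then for every prover strategy either the central symbol check fails deterministically, or the claimed center is placed so that the suffix, read in reverse order, is \emph{not} equal to the prefix $x$; I must show the residual register state after phase two is $(1, t, \dots)^T$ with a nonzero integer discrepancy $t$ in the value coordinate, so that after $M_F$ the acceptance probability is at most $\tfrac{1}{1+2|t|\delta}\le \tfrac{1}{1+2\delta}=\epsilon$.

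The main obstacle I anticipate is the \textbf{soundness discrepancy bound}, specifically guaranteeing that a mismatch between the accumulated prefix value and the decoded suffix value produces a \emph{nonzero integer} residual of magnitude at least $1$, robustly against all cheating prover choices of the center position. Two subtleties arise. (a) A prover may place the claimed center at a position $j$ where the two halves have \emph{unequal lengths}; then phase two applies more or fewer inverse steps than phase one applied forward steps, and I need the leftover (uncancelled) encoding to be a nonzero integer. This is where the injectivity of base-$n$ encoding matters: distinct strings (including strings of different lengths, via the length-dependent weighting) must give distinct, integer-separated register values. (b) Even with equal lengths, if $x^R\neq y$ where $y$ is the actual suffix, the value difference $val$ of the two strings is a nonzero integer, and I must track that the register entry records exactly this integer difference. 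I would handle both by designing the register to store the \emph{signed difference} between the forward-accumulated value and the backward-decoded value (so equality of strings means difference $=0$, and inequality means $|{\rm difference}|\ge 1$), which reduces soundness to the same $\tfrac{1}{1+2|m|\delta}$ estimate used in Theorem~\ref{thm:Lmiddle}. A minor secondary point is confirming the operator dimensions: encoding over $\Sigma$ with $|\Sigma|=n$ needs the register to hold a value entry plus a balancing entry, consistent with the claimed $|\Sigma|+2$ affine states; I would verify this count explicitly but expect no difficulty there.
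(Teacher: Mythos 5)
Your completeness argument is fine and captures the right intuition (a telescoping cancellation around the claimed center), but the soundness step --- which you yourself flag as the main obstacle --- fails in the form you propose. The problem is the division hidden in the inverse operators. With the positional base-$n$ encoding, $A_k$ updates the value entry by $v\mapsto k+nv$, so $A_k^{-1}$ updates it by $v\mapsto (v-k)/n$. If phase one accumulates $V=\mathrm{val}(x)$ and phase two reads a suffix $y=\sigma_{j_1}\cdots\sigma_{j_t}$ applying $A_{j_b}^{-1}$ at each step, the final value entry is
\[
\frac{V-\sum_{b=1}^{t} j_b\,n^{\,b-1}}{n^{t}}.
\]
The numerator is indeed a nonzero integer whenever $y\neq x^R$, but the residual itself is \emph{not} an integer: its magnitude can be as small as $n^{-t}$, i.e., exponentially small in the input length. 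The acceptance probability $1/(1+2|m|\delta)$ then tends to $1$ as the input grows, and no fixed final matrix $M_F$ (amplification by a constant $\delta$) can restore a uniform error bound $\epsilon<1/2$. Your fallback of ``storing the signed difference between the forward-accumulated and backward-decoded values'' does not repair this, because the backward-decoded quantity produced by the $A_k^{-1}$ chain already carries the $n^{-t}$ factor; obtaining an integer difference would require computing the suffix value with explicitly tracked increasing powers of $n$ and a final subtraction (as in the configuration comparison of Theorem~\ref{thm:wk}), which is a different construction you have not specified.

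The paper avoids this issue by abandoning positional values altogether: it assigns to each alphabet symbol $\sigma_k$ its own coordinate $e_{k+1}$ and a multiplicative operator $P_k$ that scales the first entry by the $k$-th prime $p_k$ and deposits $(1-p_k)v_1$ into $e_{k+1}$ --- this is exactly why the register has $|\Sigma|+2$ states, one per symbol plus two. Its soundness argument shows that at the first mismatched mirror position the coordinate $e_{\ell+1}$ receives an integer contribution of magnitude at least $(p_\ell-1)\prod_{a<q}p_{i_a}\ge 1$, which the subsequent fractional corrections from the inverse operators cannot cancel below $1$; hence some non-first coordinate of $v_f$ has magnitude at least $1$ and the final weighting accepts with probability at most $\epsilon$. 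In short: your scheme is structurally analogous to the paper's proof of Theorem~\ref{thm:Lmiddle}, but the key quantitative lemma (a residual of magnitude at least $1$ surviving to the end of the computation) is false for the base-$n$ decode-by-inverses mechanism, and establishing it is precisely where the paper's prime-product encoding is doing the work.
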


\begin{proof}
If $|\Sigma|=1$, then $L_{\mathrm{mpal}}=L_{\mathrm{middle}}$, so the claim follows from Theorem~\ref{thm:Lmiddle}. Assume therefore that \(\Sigma=\{\sigma_0,\sigma_1,\ldots,\sigma_{n-1}\}\) with \(n\ge 2\).

The verifier uses one $4$-state affine register with basis states \(e_1,e_2,e_3,e_4\), where \(e_1\) is the unique accepting basis state and \(e_4\) is the balancing state. The register starts in \(v_0=(1,0,0,0)^T\).

For each \(k\in\{0,1,\ldots,n-1\}\), define
\[
P_k=
\begin{pmatrix}
 n & 0 & 0 & 0\\
 k & 1 & 0 & 0\\
 -k & 0 & 1 & 0\\
 1-n & 0 & 0 & 1
\end{pmatrix},
\qquad
P_k^{-1}=
\begin{pmatrix}
 1/n & 0 & 0 & 0\\
 -k/n & 1 & 0 & 0\\
 k/n & 0 & 1 & 0\\
 1-1/n & 0 & 0 & 1
\end{pmatrix}.
\]
The verifier asks the prover to identify the unique center position. Before the claimed center it applies \(P_k\) whenever the scanned symbol is \(\sigma_k\). At the claimed center it checks that the symbol is \(c\); otherwise it rejects immediately. After the claimed center it applies \(P_k^{-1}\) to each scanned \(\sigma_k\). If the scanned symbol is \(c\) outside the claimed center, or if the prover marks no position or more than one position as the center, the verifier rejects deterministically. At the end it applies
\[
M_F=
\begin{pmatrix}
1&1&1&1\\
0&\delta&0&0\\
0&0&0&\delta\\
0&-\delta&0&-\delta
\end{pmatrix},
\qquad
\delta=\frac{2(1-\epsilon)}{\epsilon},
\]
weights the resulting state, and accepts if and only if outcome~1 is observed.

\textbf{Completeness.}
Let \(w=xcx^R\) with \(x=\sigma_{i_1}\cdots\sigma_{i_s}\). Since the right half is scanned as \(\sigma_{i_s}\cdots\sigma_{i_1}\), the total operator applied before \(M_F\) is
\[
P_{i_1}^{-1}\cdots P_{i_s}^{-1}P_{i_s}\cdots P_{i_1}=I.
\]
Hence the final state before weighting is again \(v_0\), and the verifier accepts with probability~1.

\textbf{Soundness.}
Assume that exactly one center is claimed, and write the scanned input as \(w=xcy\), where
\[
x=\sigma_{i_1}\cdots\sigma_{i_s},
\qquad
y=\sigma_{j_1}\cdots\sigma_{j_t}.
\]
A straightforward induction on the length of the scanned prefix shows that after reading \(x\) the affine state is
\[
\begin{pmatrix}
 n^s\\
 \mathrm{val}(x^R)\\
 -\mathrm{val}(x^R)\\
 1-n^s
\end{pmatrix}.
\]
Applying \(P_{j_1}^{-1},P_{j_2}^{-1},\ldots,P_{j_t}^{-1}\) to the right half yields, again by induction,
\[
v_f=
\begin{pmatrix}
 \rho\\
 \Delta\\
 -\Delta\\
 1-\rho
\end{pmatrix},
\qquad
\rho=n^{s-t},
\qquad
\Delta=\mathrm{val}(x^R)-\rho\,\mathrm{val}(y).
\]
If \(s=t\), then \(\rho=1\) and
\[
\Delta=\mathrm{val}(x^R)-\mathrm{val}(y).
\]
Since \(x\neq y^R\) and \(\mathrm{val}\) is injective on \(\Sigma^s\), we have \(\Delta\neq 0\), so \(|\Delta|\ge 1\). If \(s\neq t\), then \(\rho=n^{s-t}\neq 1\). Because \(n\ge 2\), either \(\rho\ge 2\) or \(\rho\le 1/2\), and therefore
\[
|1-\rho|\ge \frac{1}{2}.
\]
After applying \(M_F\), the affine state becomes
\[
M_Fv_f=
\begin{pmatrix}
1\\
\delta\Delta\\
\delta(1-\rho)\\
-\delta(\Delta+1-\rho)
\end{pmatrix}.
\]
Hence
\[
\Pr[\mathrm{acc}]
=
\frac{1}{1+\delta|\Delta|+\delta|1-\rho|+\delta|\Delta+1-\rho|}.
\]
If \(s=t\), then \(\rho=1\) and \(|\Delta|\ge 1\), so
\[
\Pr[\mathrm{acc}]\le \frac{1}{1+2\delta}<\epsilon.
\]
If \(s\neq t\), then \(|1-\rho|\ge 1/2\), so
\[
\Pr[\mathrm{acc}]\le \frac{1}{1+\delta/2}=\epsilon.
\]
Thus the soundness error is at most \(\epsilon\).
\end{proof}

The two AM protocols above demonstrate a clear and concrete verification advantage for affine verifiers. A real-time ADfA with a single register verifies the nonregular languages \(L_{\mathrm{middle}}\) and \(L_{\mathrm{mpal}}\) with perfect completeness and tunable soundness, whereas AM(2PFA) requires exponential expected time for \(L_{\mathrm{middle}}\) and cannot verify \(L_{\mathrm{mpal}}\) at all \cite{dwork1992,ZhengQiuGruska2015}. On the quantum side, related protocols are known for 2QCFA verifiers, typically with larger expected running times \cite{AmbainisW02,ZhengQiuGruska2015}. This places one-way affine verifiers in a particularly economical position among all known finite-state AM verifiers.

\section{History-streaming verification protocols}\label{sec:history-streaming}
The first general two-way route in this paper is based on streaming a computation history and checking it online with affine registers. This route begins with a weak protocol for Turing-recognizable languages and then strengthens to strong deterministic protocols via continuation checks.

Throughout this section we adopt one common convention for the communicated configurations. Without loss of generality, the simulated single-tape machines work on a fixed marked tape segment whose leftmost symbol is \(\cent\) and whose rightmost symbol is \(\$\). The tape head never moves beyond these two markers, the end-markers are never overwritten, and every internal cell is overwritten only by an internal tape symbol. Accordingly, every communicated configuration has the form \(uqv\), where \(q\) is the current machine state, the head scans the first symbol of \(v\), and the concatenation \(uv\) is the entire current contents of that fixed marked tape segment, from \(\cent\) to \(\$\).

This convention causes no loss of generality. Every finite accepting or rejecting branch of an ordinary single-tape machine visits only finitely many tape cells, so by adding enough initially blank cells and a right end-marker, that branch can be viewed as a computation over such a fixed segment. Conversely, any computation that stays within a fixed marked segment is also a valid computation on the usual semi-infinite blank tape. In the space-bounded results below, the length of this marked tape string may be taken to be \(O(s(|w|))\).

\subsection{Weak verification by streamed computation histories}\label{sec:weak-protocol}
Let \(\mathcal{D}=(Q,\Gamma_{\mathrm{tape}},\delta,q_0,q_{\mathrm{acc}},q_{\mathrm{rej}})\) be a deterministic single-tape Turing machine recognizing \(L\). The tape alphabet \(\Gamma_{\mathrm{tape}}\) contains \(\Sigma\cup\{\cent,\$\}\), and the transition function has the form
\[
\delta:Q\times\Gamma_{\mathrm{tape}}\to Q\times\Gamma_{\mathrm{tape}}\times\{-1,0,+1\}.
\]
We choose the verifier--prover communication alphabet \(\Gamma\) to contain symbols encoding \(\Gamma_{\mathrm{tape}}\cup Q\) together with the separator symbol \(\#\).
\begin{theorem}\label{thm:wk}
    Any Turing-recognizable language can be weakly verified by a rational-valued 2ADfA with arbitrary bounded error $\epsilon$. Moreover, the protocol achieves perfect completeness.
\end{theorem}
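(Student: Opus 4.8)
The plan is to simulate a fixed Turing machine by having the prover stream its computation history and having the rational affine register audit that history locally, following the configuration-encoding method of \cite{condon1989complexity} and its finite-state adaptation in \cite{Abu2013}. Fix a single-tape Turing machine $M$ that recognizes $L$: for $w\in L$ it halts in an accepting configuration after finitely many steps, while for $w\notin L$ no accepting computation exists (whether $M$ halts rejecting or loops forever). Encode each configuration of $M$ as a string over the tape alphabet augmented by a state-marker symbol, so that a full accepting run is a word $C_0\,\#\,C_1\,\#\cdots\#\,C_T$, where $C_0$ is the start configuration for input $w$, each $C_{t+1}$ follows from $C_t$ by one move of $M$, and $C_T$ is accepting. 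Using the streaming convention of Section~\ref{sec:AM} (request symbols $1$ answered in order, with $\#$ as separator), the verifier asks the prover to supply such a history; its entire task then reduces to certifying three conditions: \textbf{(init)} $C_0$ agrees with $\cent w\$$, \textbf{(step)} every adjacent pair $(C_t,C_{t+1})$ is a legal transition, and \textbf{(accept)} $C_T$ is accepting. Conditions \textbf{(init)} and \textbf{(accept)} are finite-window checks handled by the two-way input head and the finite control; the substance is \textbf{(step)}.

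For \textbf{(step)} I would exploit that a Turing step is \emph{local}: $C_{t+1}$ differs from $C_t$ only in the size-three window around the head, and the symbol in cell $j$ of $C_{t+1}$ is a fixed function of cells $j-1,j,j+1$ of $C_t$. Consistency of a pair is thus the conjunction, over all positions, of a constant-size local rule. The plan is to encode the two configurations as integers using the base-$n$ operators $A_k$ of Section~\ref{sec: elementary affine operators}, to track the audited head position as a power $n^{\,j}$ with the exponentiation gadget $M_a$ of Section~\ref{sec:polynomial exponent}, and to accumulate a single residual coordinate that stays exactly $0$ when every local rule holds but becomes a nonzero integer---hence at least $1$ in magnitude---when a rule is violated. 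Because affine arithmetic is exact and admits negative entries, a genuine mismatch cannot cancel silently: it survives as a nonzero non-accepting coordinate of the final state $v_f$. A scaling transformation in the spirit of the matrices $M_F$ of Theorems~\ref{thm:Lmiddle} and~\ref{thm:lmpal} then inflates that coordinate so that weighting returns the accepting outcome with probability at most $\epsilon$, while leaving an all-consistent history untouched; all matrices involved have rational entries.

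Completeness is then immediate: for $w\in L$ the honest prover streams the true accepting run of $M$, all three conditions hold, the residual coordinate is exactly $0$, and weighting accepts with probability $1$, giving perfect completeness. For \emph{weak} soundness (condition 2$'$), suppose $w\notin L$; since $M$ has no accepting run, any string the prover streams must violate at least one of \textbf{(init)}, \textbf{(step)}, \textbf{(accept)}, the randomized audit localizes the flaw, and the amplification above caps the acceptance probability at $\epsilon$ uniformly in the (possibly enormous) length of the claimed history. The protocol is only \emph{weak} because nothing prevents a dishonest prover from answering requests forever: such a run never reaches the $\#$-terminated accepting configuration and simply fails to halt, which costs no acceptance probability under 2$'$ but is not counted as rejection under the strong condition~(2)---precisely the gap closed in Section~\ref{sec:strong protocol}.

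The main obstacle I anticipate is realizing \textbf{(step)} with a \emph{constant}-size rational register while configurations grow without bound along the run. Three difficulties intertwine: representing an unbounded configuration as a legal sum-to-one affine state; making the audit sensitive to a \emph{single} mismatched cell rather than to an aggregate that might accidentally cancel; and keeping every transition matrix rational while guaranteeing both exact perfect completeness and a soundness gap that does not shrink with configuration length. Locality of the transition relation is what makes this feasible, since each audited position needs only a bounded window, and the exactness of affine (rather than stochastic) updates is what prevents a real discrepancy from washing out. The remaining work, which I would carry out carefully, is to fix the encoding and the audited-position distribution so that the induced residual is bounded away from $0$ by a fixed rational, after which the $M_F$-style scaling calibrates the error to the prescribed $\epsilon$.
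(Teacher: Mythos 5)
Your architecture (prover streams a configuration history $C_0\#C_1\#\cdots$, verifier checks init/step/accept, affine exactness turns any violation into a nonzero integer residual, a final rational scaling caps acceptance at $\epsilon$, and non-halting on withheld messages is exactly why the protocol is only weak) matches the paper. But the one piece that carries the whole proof --- how a constant-size rational register actually certifies \textbf{(step)} --- is the piece you defer to ``remaining work,'' and the mechanism you hint at would not work. You propose to track ``the audited head position as a power $n^{j}$'' and to let ``the randomized audit localize the flaw'' via an ``audited-position distribution.'' If the verifier samples one position of $C_t$ to audit, the probability of hitting the single corrupted cell is on the order of $1/|C_t|$, and for a general Turing-recognizable language configurations grow without bound; the soundness gap then vanishes with configuration length and no fixed $\epsilon<1/2$ is achieved. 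Your alternative reading --- summing local-rule violations into one aggregate residual --- is the right instinct, but you yourself flag the cancellation problem and do not resolve it.

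The paper's resolution is deterministic, not randomized: while scanning $c_i$, the verifier uses a constant-size (three-symbol) lookahead buffer around the state marker to apply $\delta$ and emit the successor's symbols in the correct order, so that one register accumulates the full base-$n$ value $\mathrm{val}(\mathbf{next}(c_i))$ while the other accumulates $\mathrm{val}(c_{i+1})$ as the prover sends it. At each separator $\#$ it amplifies both values by the rational constant $C=\tfrac{1-\epsilon}{2\epsilon}$, subtracts, and weights: the residual is $C\Delta$ with $\Delta=\mathrm{val}(\mathbf{next}(c_i))-\mathrm{val}(c_{i+1})$, and since distinct configurations have distinct integer values, any violation forces $|\Delta|\ge 1$ and a pass probability of at most $1/(1+2C)=\epsilon$, \emph{independent of configuration length}. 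This whole-value comparison is exactly the aggregate you gestured at, with no cancellation issue because it is a difference of two integers rather than a sum of signed per-position terms. Without this (or an equivalent) construction, your proof of \textbf{(step)} is incomplete, and the randomized-position variant you sketch would fail soundness.
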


\begin{proof}
The prover is expected to stream a sequence
\begin{equation}\label{eq:trans}
c_0\#c_1\#c_2\#\cdots
\end{equation}
of configurations separated by \(\#\), where all blocks are written with respect to the same fixed marked work segment, and the first block is the initial configuration
\[
c_0=q_0\cent w\$,
\]
and each \(c_{i+1}\) is the successor of \(c_i\). Any malformed block, invalid separator pattern, or locally inconsistent successor step causes immediate rejection whenever the verifier can detect it. The only remaining source of non-halting is that a dishonest prover may keep streaming symbols forever without completing a halting computation.

The verifier uses two $4$-state affine registers. One register, denoted \(R_{\mathrm{cmp}}\), is used to compare the received configuration with the successor predicted from the previous block. The other, denoted \(R_{\mathrm{next}}\), simultaneously builds the predicted successor of the current block for the next comparison. Both registers start in \((1,0,0,0)^T\).

To append a symbol \(j\) to entry~2 or entry~3 we use the affine operators
\[
A_j=\begin{pmatrix}
1&0&0&0\\ j&n&0&0\\ 0&0&1&0\\ -j&1-n&0&1
\end{pmatrix},
\qquad
B_j=\begin{pmatrix}
1&0&0&0\\ 0&1&0&0\\ j&0&n&0\\ -j&0&1-n&1
\end{pmatrix},
\]
where \(n=|\Gamma|\). Fix an ordering \(\Gamma=\{\gamma_0,\ldots,\gamma_{n-1}\}\) and use the base-\(n\) encoding \(\mathrm{val}(\cdot)\) from Section~\ref{sec:elementary-affine-operators}. Since all streamed configuration blocks have the same length, equality of blocks is equivalent to equality of their \(\mathrm{val}\)-encodings.

\textbf{Protocol.}
\begin{enumerate}
\item While the prover streams the first block \(c_0\), the verifier compares it symbol by symbol with the pattern \(q_0\cent w\$\). At the same time it computes \(\mathbf{next}(c_0)\) in entry~2 of \(R_{\mathrm{next}}\).
\item After the separator following \(c_0\) is read, the verifier rejects if the initial-block test failed. Otherwise it swaps the roles of the two registers, so that entry~2 of \(R_{\mathrm{cmp}}\) now stores \(\mathrm{val}(\mathbf{next}(c_0))\), and reinitializes the other register for the next successor computation.
\item When the prover streams a later block \(c_i\) with \(i\ge 1\), the verifier appends its symbols into entry~3 of \(R_{\mathrm{cmp}}\) by the operators \(B_j\).
\item At the same time, it computes $\mathbf{next}(c_i)$ in entry~2 of $R_{\mathrm{next}}$.
This is done online by maintaining a finite-state buffer, which can be implemented by encoding information in finite memory. Specifically, any valid configuration can be parsed as $c_i = u' r q a v'$, where $q \in Q$ is the state, $a \in \Gamma_{\mathrm{tape}}$ is the scanned symbol, and $r$ is the tape symbol immediately preceding $q$ (or the left end-marker $\cent$). The verifier processes the prefix $u'$ normally, appending each symbol $\gamma$ into entry~2 by applying $A_\gamma$. When the verifier buffers the local boundary $r q a$, it computes the transition $\delta(q, a) = (q', b, d)$ with $d \in \{-1, 0, +1\}$ and then appends the updated local block in the correct order by applying the corresponding sequence of affine operators (note that matrices are applied to the state vector from right to left):
\[
\begin{array}{ll}
d=-1: & A_b A_r A_{q'},\\
d=0:  & A_b A_{q'} A_r,\\
d=+1: & A_{q'} A_b A_r.
\end{array}
\]
After this boundary update is complete, the verifier empties the buffer and resumes appending the remaining symbols of $v'$ sequentially by applying $A_\gamma$ for each symbol until the separator $\#$ is reached.
\item When the separator \(\#\) is reached, the verifier compares the second and third entries of \(R_{\mathrm{cmp}}\). It first amplifies both entries by
\[
T_C=\begin{pmatrix}
1&0&0&0\\
0&C&0&0\\
0&0&C&0\\
0&1-C&1-C&1
\end{pmatrix},
\qquad C=\frac{1-\epsilon}{2\epsilon},
\]
and then subtracts them by
\[
S=\begin{pmatrix}
1&0&0&0\\
0&1&-1&0\\
0&-1&1&0\\
0&1&1&1
\end{pmatrix}.
\]
Thus the comparison register becomes
\[
\begin{pmatrix}
1\\ C\bigl(\mathrm{val}(\mathbf{next}(c_{i-1}))-\mathrm{val}(c_i)\bigr)\\ C\bigl(\mathrm{val}(c_i)-\mathrm{val}(\mathbf{next}(c_{i-1}))\bigr)\\ 0
\end{pmatrix}.
\]
Weighting this register rejects immediately on outcomes 2 or 3 and passes on outcome 1.
\item If \(c_i\) is halting, the verifier accepts or rejects according to whether \(c_i\) is accepting or rejecting. Otherwise, it swaps the roles of \(R_{\mathrm{cmp}}\) and \(R_{\mathrm{next}}\), reinitializes the new \(R_{\mathrm{next}}\), and proceeds to the next block.
\end{enumerate}

\textbf{Completeness.}
If \(w\in L\), choose an accepting computation branch of \(\mathcal{D}\) and fix a marked work segment satisfying the convention above for that branch. Then the honest prover can stream the corresponding accepting history over this fixed segment. The first block is checked exactly, every successor comparison is exact, and the verifier accepts when the first accepting configuration is reached. Hence the protocol has perfect completeness.

\textbf{Soundness.}
If \(w\notin L\), then no accepting computation history of \(\mathcal{D}\) exists on the usual semi-infinite blank tape, and therefore none exists in the fixed-segment form above. If the prover ever sends a malformed block or a pair of consecutive configurations that is not a legal successor pair, then
\[
\Delta:=\mathrm{val}(\mathbf{next}(c_{i-1}))-\mathrm{val}(c_i)
\]
is a nonzero integer. So the probability of incorrectly passing that comparison is at most
\[
\frac{1}{1+2C|\Delta|}\le \frac{1}{1+2C}=\epsilon.
\]
If instead the prover streams only valid configurations, then the verifier rejects the input deterministically or never halts.
\end{proof}

\subsection{Strong verification via continuation checks}\label{sec:strong-protocol}
The weak protocol above elegantly captures the streamed computation history with perfect completeness, but it leaves one loophole: a dishonest prover may keep transmitting symbols forever. To obtain \emph{strong} verification within the same history-streaming route, the verifier must force eventual halting. 

A particularly simple special case occurs for linear-space deterministic computations: once the input has been scanned, the verifier can reuse its input head as a linear counter and force each streamed configuration to have the expected \(O(|w|)\) size. This yields the following consequence.

\begin{corollary}\label{thm:DTM-direct}
If a language \(L\) is recognized by a single-tape deterministic Turing machine running in linear space and time \(O(t(|w|))\), then \(L\) has a perfect-completeness strong AM protocol with a rational-valued 2ADfA verifier and expected running time \(O(|w|t(|w|))\). Consequently,
\[
\mathbf{DSPACE}(n)\subseteq \mathbf{AM}_{\textnormal{q}}^{=1}(\mathrm{2ADfA}).
\]
\end{corollary}
The next theorem handles general space and time bounds by replacing the exact linear-size test with a small rejection experiment that is performed periodically while the prover is streaming symbols.

\begin{theorem}\label{thm:DTM-pr}
Let $L$ be recognized by a single-tape DTM in space $O(s(|w|))$ and time $t(|w|)$. Assume either $s(|w|)t(|w|)=O(|w|^k)$ or $s(|w|)t(|w|)=2^{O(|w|)}$ for some constants. Then $L$ can be strongly verified by a rational-valued 2ADfA with bounded error $\epsilon\in(0,1/2)$ and expected running time $O\bigl(|w|\,s(|w|)t(|w|)/\epsilon\bigr)$.
\end{theorem}

\begin{proof}
Take the weak verifier from Theorem~\ref{thm:wk}. The successor-comparison part of the protocol is unchanged. We only add a separate continuation-check register. Let
\[
N:=|\tilde w|=|w|+2.
\]
The key implementation device is a \emph{sweep-based metronome clocking}: after the initial input scan, the verifier reuses its read-only input head as a periodic clock for the streamed transcript. Each time it requests one more prover symbol, it also advances the input head by one step to the right; whenever the head reaches the right end-marker, it performs one continuation check and then deterministically sweeps back to the left end-marker to start the next batch. Thus exactly $N$ streamed symbols occur between consecutive continuation checks, and this scheduling is implementable by finite control.

\textbf{Modular protocol execution.}
\begin{itemize}
\item \textbf{Phase 1: Streaming simulation.} The verifier runs the computation-history check from Theorem~\ref{thm:wk} without modification.
\item \textbf{Phase 2: Continuation check.} In parallel with the streamed simulation, the verifier uses the sweep-based metronome clocking to trigger a small rejection experiment once every batch of $N$ prover symbols.
\end{itemize}

\textbf{Implementing the continuation check.}

\emph{Polynomial case.} Assume \(s(|w|)t(|w|)\le cN^k\). Set
\[
p=\frac{1}{mN^{k-1}},\qquad m:=\left\lceil\frac{c}{\epsilon}\right\rceil.
\]
Use a \((k+2)\)-state register initialized to \((1,0,\ldots,0)^T\). During one full scan of \(\tilde w\), apply the operators from Section~\ref{sec:polynomial-exponent} so that the register stores
\[
\myvector{1\\ N-1\\ \vdots\\ (N-1)^{k-1}\\ 0\\ \bar 1}.
\]
At the right end-marker, apply a single affine linear-combination gadget from Section~\ref{sec:elementary-affine-operators} so that the final \(\ell_1\)-norm becomes \(mN^{k-1}\) while the first entry remains 1. Weighting then rejects with probability exactly \(p\).

\emph{Exponential case.} Assume \(s(|w|)t(|w|)\le c2^{kN}\). Set
\[
p=\frac{1}{m2^{kN}},\qquad m:=\left\lceil\frac{c}{\epsilon}\right\rceil.
\]
Use a two-state register, initialized to \((1,0)^T\), and apply
\[
\begin{pmatrix}2^{-k}&0\\ 1-2^{-k}&1\end{pmatrix}
\]
once per scanned input symbol. After the full input has been scanned, the first entry becomes \(2^{-kN}\). Composing once more with
\[
\begin{pmatrix}1/m&0\\ 1-1/m&1\end{pmatrix}
\]
changes the first entry to \(1/(m2^{kN})\), so weighting rejects with probability \(p\).

\textbf{Protocol.}
The verifier from Theorem~\ref{thm:wk} is run exactly as before, except that after every batch of $N$ streamed symbols it performs the continuation check above. If the check rejects, the whole protocol rejects immediately.

\textbf{Completeness.}
Assume \(w\in L\), and let the honest prover stream the unique accepting computation history of the simulated DTM. The successor checks of Theorem~\ref{thm:wk} are always passed. The only possible error is a false rejection by the continuation checks.

If \(s(|w|)t(|w|)\le cN^k\), then the full transcript has length \(O(N^k)\), so there are at most \(cN^{k-1}\) continuation checks. By the union bound,
\[
P_{\mathrm{false\,rej}}
\le cN^{k-1}\cdot \frac{1}{mN^{k-1}}
\le \epsilon.
\]
If \(s(|w|)t(|w|)\le c2^{kN}\), then there are at most \(c2^{kN}/N\) checks, and therefore
\[
P_{\mathrm{false\,rej}}
\le \frac{c2^{kN}}{N}\cdot \frac{1}{m2^{kN}}
\le \frac{c}{m}
\le \epsilon.
\]
Thus the verifier accepts every member with probability at least \(1-\epsilon\).

\textbf{Soundness.}
The continuation checks only add extra rejecting branches. Therefore every accepting computation path of the strong verifier is also an accepting computation path of the weak verifier from Theorem~\ref{thm:wk}. Consequently the acceptance probability on non-members remains at most \(\epsilon\). In addition, if a dishonest prover keeps streaming symbols forever instead of completing a block sequence, the continuation checks form an infinite sequence of independent rejection opportunities, so the probability of non-halting becomes 0.

\textbf{Expected running time.}
For honest inputs, the verifier processes a transcript of total length \(O(s(|w|)t(|w|))\). The extra metronome sweeps contribute only a constant-factor overhead per batch, so the deterministic work done on a correct transcript is still \(O(s(|w|)t(|w|))\). On dishonest inputs that keep streaming forever, the number of continuation checks before rejection is geometric with mean \(1/p\), and one batch contains $N=O(|w|)$ streamed symbols. Hence the extra expected length until rejection is \(O(N/p)\), which is
\[
O\bigl(mN^k\bigr)=O\bigl(|w|s(|w|)t(|w|)/\epsilon\bigr)
\]
in the polynomial case and
\[
O\bigl(mN2^{kN}\bigr)=O\bigl(|w|s(|w|)t(|w|)/\epsilon\bigr)
\]
in the exponential case. This proves the stated bound.
\end{proof}

Combining the above theorem and Theorem~\ref{thm:facts}, we can obtain the following result:
\begin{corollary}
        \textnormal{$\mathbf{ATIME}(2^{O(n)})=\mathbf{DSPACE}(2^{O(n)})\subseteq \mathbf{AM}_{\textnormal{q}}(\text{2ADfA})$}.
\end{corollary}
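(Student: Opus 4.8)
The plan is to split the corollary into two independent pieces: the complexity-class identity $\mathbf{ATIME}(2^{O(n)})=\mathbf{DSPACE}(2^{O(n)})$, which follows purely from the Chandra--Kozen--Stockmeyer relations of Theorem~\ref{thm:facts}, and the verification inclusion $\mathbf{ATIME}(2^{O(n)})\subseteq\mathbf{AM}_{\textnormal{q}}(\text{2ADfA})$, which I would obtain by instantiating Theorem~\ref{thm:ATM pr} with matching resource bounds. No new protocol is needed; the whole argument is parameter bookkeeping layered on top of results already proved.

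For the identity I would apply both parts of Theorem~\ref{thm:facts} to the exponential families. Part~(1) gives $\mathbf{ATIME}(2^{cn})\subseteq\mathbf{DSPACE}(2^{cn})$ for each constant $c$, hence $\mathbf{ATIME}(2^{O(n)})\subseteq\mathbf{DSPACE}(2^{O(n)})$. Part~(2) gives $\mathbf{DSPACE}(2^{cn})\subseteq\mathbf{ATIME}\bigl((2^{cn})^2\bigr)=\mathbf{ATIME}(2^{2cn})$, and since $2^{2cn}=2^{O(n)}$, taking the union over $c$ yields the reverse inclusion. The only quantitative point is that the exponential class is closed under the quadratic blow-up of Part~(2), which is immediate.

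For the inclusion I would take $L\in\mathbf{ATIME}(2^{O(n)})$, so that $L$ is recognized by an ATM running in time $t(n)=2^{O(n)}$. Bounding space by time---an ATM making $t$ moves can scan at most $t$ cells, so $s(n)\le t(n)=2^{O(n)}$---gives $s(n)\,t(n)\le t(n)^2=2^{O(n)}$, which is exactly the hypothesis $s(|w|)t(|w|)=2^{O(|w|)}$ required by Theorem~\ref{thm:ATM pr}. That theorem then supplies a rational-valued 2ADfA verifying $L$ with bounded error $\epsilon$, so $L\in\mathbf{AM}_{\textnormal{q}}(\text{2ADfA})$. Composing with the identity delivers the stated containment for $\mathbf{DSPACE}(2^{O(n)})$ as well.

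The conceptual point worth stressing---and essentially the only subtlety---is \emph{why} one routes $\mathbf{DSPACE}(2^{O(n)})$ through ATMs rather than through the DTM protocol of Theorem~\ref{thm:DTM pr}. A DTM using space $2^{O(n)}$ may run for doubly-exponential time, so its product $s\cdot t$ is doubly exponential and fails the $s\cdot t=2^{O(|w|)}$ hypothesis; the Chandra--Kozen--Stockmeyer equality instead converts such a machine into an ATM whose \emph{time} is only singly exponential, which is precisely the regime Theorem~\ref{thm:ATM pr} can handle. Beyond verifying that the space-by-time bound and the quadratic overhead both stay inside the exponential family, the proof is routine.
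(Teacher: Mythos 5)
Your proposal is correct and matches the paper's intended argument exactly: the paper derives this corollary by combining Theorem~\ref{thm:ATM pr} with Theorem~\ref{thm:facts}, which is precisely your decomposition into the class identity (via the quadratic-overhead closure) and the instantiation of the ATM protocol with $s(n)t(n)\le t(n)^2=2^{O(n)}$. Your added remark on why $\mathbf{DSPACE}(2^{O(n)})$ must be routed through alternating time rather than Theorem~\ref{thm:DTM pr} is a correct and worthwhile clarification of a point the paper leaves implicit.
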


\section{Game-reduction verification via alternating subset-sum}\label{sec:game-reduction}
The second general two-way route in this paper is reduction-based. Instead of checking a full accepting computation directly, the verifier checks a streamed reduction to a complete interactive problem and consumes the produced instance online. The target problem is an alternating subset-sum game \cite{FearnleyJurdzinski2015}, whose affine suitability comes from the fact that its winning condition is an exact zero-residual test for a linear combination of integers.

\subsection{Verifying the alternating subset-sum game}\label{sec:ssg-route}
An instance consists of a target integer \(S\) and pairs \((a_i,b_i)\) and \((e_i,f_i)\) for \(i=1,\ldots,n\), all written in binary. The universal player first chooses one number from each pair \((a_i,b_i)\); after seeing those choices, the existential player chooses one number from each pair \((e_i,f_i)\). The instance is winning for the existential player if the final sum can always be made equal to \(S\). Formally,
\[
L_{SSG}=\Bigl\{S\,\forall(a_1,b_1)\,\exists(e_1,f_1)\cdots \forall(a_n,b_n)\,\exists(e_n,f_n)\Bigr\},
\]
where membership means that for every choice of \(x_i\in\{a_i,b_i\}\) there exist choices \(y_i\in\{e_i,f_i\}\) such that
\[
S=\sum_{i=1}^n(x_i+y_i).
\]
This game is PSPACE-complete \cite{FearnleyJurdzinski2015}.

\begin{theorem}\label{thm:LSSG}
$L_{SSG}$ can be verified by a 2ADfA with bounded error $\epsilon\in(0,1/2)$. Moreover, the protocol achieves perfect completeness.
\end{theorem}

\begin{proof}
The verifier uses two affine registers.
\begin{itemize}
\item A $3$-state working register stores the residual
\[
R=S-\sum_{i=1}^n(x_i+y_i)
\]
in the form \((1,R,-R)^T\).
\item A $2$-state restart-on-accept register controls a small acceptance probability at the end of each round.
\end{itemize}
Both registers are reinitialized at the beginning of every round.

\textbf{Protocol in one round.}
\begin{enumerate}
\item The verifier reads \(S\) and initializes the working register to \((1,S,-S)^T\).
\item For each \(i=1,\ldots,n\):
\begin{enumerate}
\item The verifier updates the restart register by
\[
\begin{pmatrix}1/2&0\\ 1/2&1\end{pmatrix},
\]
so after all \(n\) universal choices its first entry is \(2^{-n}\).
\item It chooses \(x_i\in\{a_i,b_i\}\) uniformly at random and subtracts \(x_i\) from the residual stored in the working register.
\item The prover responds with some \(y_i\in\{e_i,f_i\}\), which is checked by finite control and then subtracted from the working register.
\end{enumerate}
\item At the end of the round, the working register has the form \((1,R,-R)^T\). The verifier weights it and proceeds only if outcome~1 is observed. Since \(R\) is an integer, this happens with probability
\[
\frac{1}{1+2|R|},
\]
which is \(1\) when \(R=0\) and at most \(1/3\) when \(R\neq 0\).
\item If outcome~1 occurs, the verifier applies
\[
\begin{pmatrix}2\epsilon/3&0\\ 1-2\epsilon/3&1\end{pmatrix}
\]
to the restart register and weights it. It accepts iff the first basis state is observed; otherwise it restarts from the beginning of a fresh round.
\end{enumerate}

\textbf{Completeness.}
If the instance belongs to \(L_{SSG}\), then for every universal sequence \((x_1,\ldots,x_n)\) the prover has a legal response making \(R=0\). Hence every round reaches step~4, and the per-round acceptance probability is exactly
\[
\alpha:=\frac{2\epsilon}{3}\,2^{-n}>0.
\]
There are no rejecting rounds. Therefore the verifier accepts with probability \(1\) over repeated independent rounds.

\textbf{Soundness.}
If the instance does not belong to \(L_{SSG}\), there exists a universal choice sequence \(x^\star\) such that every legal prover response yields \(R\neq 0\). The verifier chooses \(x^\star\) with probability \(2^{-n}\). Conditioned on that event, step~3 rejects with probability at least \(2/3\). Therefore the per-round rejection probability satisfies
\[
r\ge \frac{2}{3}\,2^{-n}.
\]
A round can accept only after step~3 passes and the restart register produces its accepting outcome. Hence the per-round acceptance probability is at most
\[
a\le \alpha=\frac{2\epsilon}{3}\,2^{-n}.
\]
Consequently
\[
\frac{a}{r}\le \epsilon.
\]
Since each round either accepts, rejects, or restarts, the eventual acceptance probability over all restarted rounds is at most
\[
\frac{a}{a+r}\le \frac{a}{r}\le \epsilon.
\]
\end{proof}

\subsection{Linear-space reductions and the PSPACE consequence}\label{sec:pspace-route}
With a verifier for \(L_{SSG}\) in hand, the game-reduction route extends to every language that admits a linear-space reduction to this PSPACE-complete problem. The verifier checks the streamed reduction online and feeds each emitted symbol directly into the subset-sum game protocol, so the reduction output never needs to be stored in full.

\begin{theorem}\label{thm:reduceLSSG}
Any language that is reducible in linear space to $L_{SSG}$ can be strongly verified by a rational-valued 2ADfA with bounded error $\epsilon$. Moreover, the protocol achieves perfect completeness.
\end{theorem}

\begin{proof}
Let \(M\) be a deterministic linear-space transducer that maps each input \(w\) to an instance \(u=M(w)\) of \(L_{SSG}\). We assume that \(M\) has a read-only input tape, a single linear-space work tape, and a write-only output tape. During each transition it may either emit one output symbol or emit nothing. The prover streams an accepting computation history of \(M\), and the verifier processes that history online.

Fix the soundness parameters of the two components as follows: the reduction checker from Corollary~\ref{thm:DTM-direct} is instantiated with error parameter \(\epsilon/2\), while the subset-sum game verifier from Theorem~\ref{thm:LSSG} is instantiated with error parameter \(\epsilon\).

One round of the verifier combines two tasks.
\begin{enumerate}
\item It checks the streamed computation history of \(M\) using the perfect-completeness linear-space protocol from Corollary~\ref{thm:DTM-direct}. Any invalid history causes immediate rejection.
\item Whenever the simulated transducer emits an output symbol, the verifier feeds that symbol to one round of the subset-sum game verifier from Theorem~\ref{thm:LSSG}. Thus the produced instance is never stored in full; it is generated and consumed online.
\end{enumerate}
If the reduction check rejects, the whole round rejects. If the reduction check accepts and the subset-sum game round accepts, the whole protocol accepts. If the reduction check accepts and the subset-sum game round restarts, then the whole protocol restarts from a fresh round, and the prover must stream the reduction computation again.

\textbf{Completeness.}
If \(w\in L\), the honest prover streams the correct accepting computation of \(M\), so the reduction checker never rejects, and the emitted instance satisfies \(M(w)\in L_{SSG}\). The subset-sum game component therefore has perfect completeness by Theorem~\ref{thm:LSSG}. Hence the combined protocol has perfect completeness as well.

\textbf{Soundness.}
If \(w\notin L\), consider any prover strategy.
\begin{itemize}
\item If the streamed reduction history is invalid, then in each round the reduction checker accepts with probability at most \(\epsilon/2\) and rejects with probability at least \(1/2\). Since acceptance of the whole round is possible only after the reduction checker accepts, the eventual acceptance probability over all restarted rounds is at most
\[
\frac{\epsilon/2}{\epsilon/2+1/2} < \epsilon.
\]
\item If the streamed reduction history is valid, then it produces the correct instance \(M(w)\notin L_{SSG}\). In that case the first component has perfect completeness, and the second component is exactly the verifier of Theorem~\ref{thm:LSSG}, whose eventual acceptance probability is at most \(\epsilon\).
\end{itemize}
Thus the whole protocol has soundness error at most \(\epsilon\).
\end{proof}

Since \(L_{SSG}\) is PSPACE-complete and every language in \(\mathbf{PSPACE}\) is log-space reducible to it \cite{FearnleyJurdzinski2015,Yakaryilmaz2025QIP}, Theorem~\ref{thm:reduceLSSG} implies the following corollary.

\begin{corollary}
    \textnormal{$\mathbf{PSPACE}\subseteq \mathbf{AM}_{\textnormal{q}}^{=1}(\mathrm{2ADfA})$.}
\end{corollary}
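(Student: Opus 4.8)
The plan is to derive the corollary as a direct instantiation of Theorem~\ref{thm:reduceLKG}, reducing the remaining work to checking that the $\mathbf{PSPACE}$-completeness of $L_{KG}$ supplies a reduction of exactly the type that theorem consumes. First I would invoke Reif's result \cite{reif1984complexity}, by which $L_{KG}$ is $\mathbf{PSPACE}$-complete under log-space many-one reductions; hence for every $L\in\mathbf{PSPACE}$ there is a function $f$, computable by a log-space DTM, with $w\in L \iff f(w)\in L_{KG}$.

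Next I would bridge the gap between the log-space reduction furnished by Reif and the single-tape \emph{linear-space} DTM assumed in the proof of Theorem~\ref{thm:reduceLKG}. Since $\log n=O(n)$, a standard folding places the read-only input together with the $O(\log n)$ work tape onto one read/write tape, yielding a single-tape DTM that computes $f$ in total space $n+O(\log n)=O(n)$. Crucially, the configurations of this machine stay linear in $|w|$, which is precisely the size class that subroutine~1 (inherited from Corollary~\ref{thm:DTM dirrect}) is built to police, while the streamed output $f(w)$ is what subroutine~2 (from Theorem~\ref{thm:LKG}) checks for membership in $L_{KG}$. With the reduction now cast in the required form, Theorem~\ref{thm:reduceLKG} hands back a 2ADfA that verifies $L$ with bounded error $\epsilon$ and perfect completeness.

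It then remains to certify the two decorations on the target class $\mathbf{AM}_{\textnormal{q}}^{=1}$. Perfect completeness is delivered verbatim by Theorem~\ref{thm:reduceLKG}. Rationality follows because every affine operator in the composed protocol is rational: the string-encoding and subtraction gadgets of Theorem~\ref{thm:LKG}, the restart map $M_\delta$ with $\delta=\tfrac{2}{3}\epsilon$ (rational since $\epsilon$ is assumed rational), and the successor-checking machinery imported from the weak protocol all have rational entries. Hence $L\in\mathbf{AM}_{\textnormal{q}}^{=1}(\text{2ADfA})$, and as $L$ was an arbitrary $\mathbf{PSPACE}$ language the containment follows. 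The one genuinely non-routine point is the model transport in the middle step: one must verify that the log-space-to-single-tape conversion keeps configurations linear-size, so that the hypothesis of Theorem~\ref{thm:reduceLKG}—and the linear-space checker behind subroutine~1—actually applies; everything else is direct instantiation and bookkeeping that the transitions remain rational.
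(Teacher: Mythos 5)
Your proposal is correct and follows essentially the same route as the paper, which derives the corollary in one line from Theorem~\ref{thm:reduceLKG} together with the fact that every $\mathbf{PSPACE}$ language is log-space (hence linear-space) reducible to $L_{KG}$ by Reif's result. The only difference is that you spell out the log-space-to-linear-space single-tape conversion and the rationality bookkeeping, which the paper leaves implicit.
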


This completes the game-reduction route. In contrast with the history-streaming constructions of Section~\ref{sec:history-streaming}, it restores perfect completeness and reaches \(\mathbf{PSPACE}\) by replacing direct transcript checking with the streamed verification of a complete interactive game.

\section{Conclusion}\label{sec:conclusion}
This paper studied affine automata as verifiers in Arthur--Merlin proof systems under the requirement that all affine transitions have rational entries. On the one-way side, we constructed real-time perfect-completeness protocols for a fixed-center middle language and a fixed-center palindrome language. Since the first of these languages requires exponential expected time for any AM(2PFA) verifier and the second cannot be verified by any AM(2PFA) protocol at all, these results demonstrate that affine interference yields a significant and immediate verification advantage even in the simplest head-movement regime.

For two-way verifiers, we developed two complementary strategies. The first is a history-streaming route: a weak protocol verifies every Turing-recognizable language by encoding and checking a computation history with perfect completeness, and a probabilistic continuation check based on sweep-based metronome clocking upgrades this to strong verification with bounded error for deterministic computations whose verified histories have polynomial or exponential total length. The second is a game-reduction route: by streaming a linear-space reduction to a PSPACE-complete alternating subset-sum game and verifying the game interactively, we achieved perfect-completeness strong verification for every language in \(\mathbf{PSPACE}\).

Two broad themes emerge from these constructions. First, the present results show that rational-valued affine dynamics retain the central verification strength of earlier affine and quantum finite state models. This indicates that the source of that strength is structural. Second, affine verifiers naturally support two complementary proof paradigms---direct algebraic history checking and restartable reductions to complete interactive problems---and each paradigm offers distinct trade-offs between completeness, running time, and the class of languages captured.

Several natural questions remain open. Can rational-valued 2ADfA verifiers achieve strong verification with perfect completeness for all Turing-recognizable languages? More broadly, it would be valuable to understand how the real-time verification power of rational affine verifiers compares with that of quantum or hybrid verifiers in interactive settings, and whether the algebraic techniques developed here can be extended to other finite-state models.

\nonumsection{Acknowledgements}
The authors thank Professor A. Yakary{\i}lmaz for valuable suggestions and remarks. This project is supported by the National Natural Science Foundation of China (Grants No. 12031004, No. 12271474).

\nonumsection{Author contributions} 
Z.C. was responsible for the conceptualization, methodology, and original draft preparation. J.W. provided the key algorithmic support and facilitated the collaboration. The manuscript was revised collaboratively by Z.C. and J.W. All authors have read and agreed to the published version of the manuscript.

\nonumsection{Data Availability}
No datasets were generated or analysed during the current study.

\nonumsection{Declarations}

\nonumsection{Competing interests}
The authors declare that they have no competing interests.

\bibliographystyle{unsrt}
\bibliography{ref}

\end{document}